\newtheorem{thm}{Theorem}[section]
\newtheorem{lem}[thm]{Lemma}
 \newcommand{\thmref}[1]{Theorem~\ref{#1}}
 \newcommand{\lemref}[1]{Lemma~\ref{#1}}
\newcommand{\R}{{\mathbb R}}
\newcommand{\C}{{\mathbb C}}
\newcommand{\bee}{\begin{equation*}}
\newcommand{\eee}{\end{equation*}}
\newcommand{\be}{\begin{equation}}
\newcommand{\ee}{\end{equation}}
\newcommand{\pn}{\par\noindent}
\title{Uniqueness theorem for inverse scattering problem with
non-overdetermined data}
\author{A G Ramm\\
\small Department of Mathematics\\[-0.8ex]
\small Kansas State University, Manhattan, KS 66506-2602, USA\\[-0.8ex]
\small \texttt{ramm@math.ksu.edu}\\
}
\begin{document}
\date{}
\maketitle
\begin{abstract}
Let $q(x)$ be real-valued compactly supported sufficiently smooth
function. It is proved that the scattering data $A(-\beta,\beta,k)$
$\forall \beta\in S^2$, $\forall k>0$ determine $q$ uniquely.
\end{abstract}
\pn{\\ MSC: 35P25, 35R30, 81Q05  \\
{\em Key words:} inverse scattering, non-overdetermined data, back
scattering. }

\section{Introduction}
Since the forties of the last century the physicists were interested
in the uniqueness of the determination of a physical system by its
$S$-matrix. If the physical system is described by a Hamiltonian of
the type $H=-\nabla^2+q(x)$, then the $S$-matric is in one-to-one
correspondence with the scattering amplitude $A$,
$S=I+\frac{ik}{2\pi}A$, where $I$ is the identity operator and $A$
is an operator in $L^2(S^2)$ with the kernel $A(\beta,\alpha, k)$,
$S^2$ is the unit sphere in $\R^3$, and $k^2$ is the energy, $k>0$.
The scattering amplitude is defined as follows. If the incident
plane wave $u_0=e^{ik\alpha \cdot x}$, $\alpha\in S^2$, is scattered
by the potential $q$, then the scattering solution $u(x,\alpha,k)$
solves the scattering problem: \be\label{e1}
[\nabla^2+k^2-q(x)]u=0\quad in\quad \R^3\ee \be\label{e2}
u=e^{ik\alpha\cdot
x}+A(\beta,\alpha,k)\frac{e^{ikr}}{r}+o\left(\frac{1}{r}\right),\quad
r:=|x|\to \infty,\ \beta:=\frac{x}{r}. \ee The coefficient
$A(\beta,\alpha,k)$ is called the scattering amplitude. The problem
of interest is to determine $q(x)$ given the scattering data. This
problem is called the inverse scattering problem. The function
$A(\beta,\alpha,k)$ depends on two unit vectors $\beta$, $\alpha$,
and on the scalar $k$, i.e., on five variables.

{\it Assumption A): We assume that $q$ is compactly supported, i.e.,
$q(x)=0$ for $|x|>a$, where $a>0$ is an arbitrary large fixed
number, $q(x)$ is real-valued, i.e., $q=\overline{q}$, and $q(x)\in
H_0^\ell(B_a)$, $\ell>2$.}

Here $B_a$ is the ball centered at the
origin and of radius $a$, and $H_0^\ell(B_a)$ is the closure of
$C_0^\infty(B_a)$ in the norm of the Sobolev space $H^\ell(B_a)$ of
functions whose derivatives up to the order $\ell$ belong to
$L^2(B_a).$ It was proved in \cite{R6} (see also \cite{R35} and
\cite{R486}, Chapter 6) that if $q=\overline{q}$ and $q\in L^2(B_a)$
is compactly supported, then the resolvent kernel $G(x,y,k)$ of the
Schr\"{o}dinger operator $-\nabla^2+q(x)-k^2$ is a meromorphic
function of $k$ on the whole complex plane $k$, analytic in Im$k\geq
0$ except, possibly, of a finitely many simple poles at the points
$ik_j$, $k_j>0$, $1\leq j\leq n$, where $-k_j^2$ are negative
eigenvalues of the selftadjoint operator $-\nabla^2+q(x)$ in
$L^2(\R^3)$. Consequently, the scattering amplitude
$A(\beta,\alpha,k)$, corresponding to the above $q$, is a restriction
to the positive semiaxis $k\in[0,\infty)$ of a meromorphic on the
whole complex $k$-plane function.

It was known long ago that the
scattering data $A(\beta,\alpha,k)$ $\forall \alpha,\beta\in S^2$,
$\forall k>0$,  determine $q(x)$ uniquely. For even larger class
of potentials this result was proved in \cite{Sa}.

The above scattering data depend on five variables (two unit vectors 
$\alpha$ and $\beta$, and one scalar $k$). The potential $q(x)$
depends on three variables ($x_1, x_2, x_3$). Therefore, 
the inverse scattering problem, which consists of finding $q$
from the above scattering data,  is overdetermined.

It was proved by
the author (\cite{R228}), that the {\it fixed-energy scattering data}
$A(\beta,\alpha):=A(\beta,\alpha,k_0)$, $k_0=const>0$, $\forall
\beta \in S_1^2,$ $\forall \alpha\in S_2^2$, determine real-valued
compactly supported $q\in L^2(B_a)$ uniquely. Here $S_j^2,$ $j=1,2$,
are arbitrary small open subsets of $S^2$ (solid angles).

The scattering data $A(\beta,\alpha)$ depend on four variables (two
unit vectors), while the unknown $q(x)$ depends on three variables.
In this sense the inverse scattering problem, which consists of finding 
$q$ from the fixed-energy scattering data $A(\beta,\alpha)$, is still 
overdetermined.

For many decades there were no uniqueness theorems for 3D inverse
scattering problems with non-overdetermined data. The goal of this
paper is to prove such a theorem.

\begin{thm}\label{thm1}
If $\overline{q}=q\in H_0^\ell(B_a)$, $\ell>2$, then the data
$A(-\beta,\beta,k)$ $\forall \beta\in S^2$, $\forall k>0$, determine
$q$ uniquely.
\end{thm}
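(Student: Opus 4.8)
The plan is to reduce the problem to an orthogonality relation and then to recover the Fourier transform of the difference of two potentials from the high-energy behaviour of that relation. Suppose $q_1,q_2\in H_0^\ell(B_a)$ are real-valued and produce the same backscattering data, and set $p:=q_1-q_2$. Let $u_j(x,\beta,k)$ denote the scattering solution for $q_j$. Starting from the representation
\be
A(\beta,\alpha,k)=-\frac{1}{4\pi}\int_{B_a}e^{-ik\beta\cdot y}q(y)u(y,\alpha,k)\,dy
\ee
together with the Lippmann--Schwinger equation $u(x,\alpha,k)=e^{ik\alpha\cdot x}-\int_{B_a}\frac{e^{ik|x-y|}}{4\pi|x-y|}q(y)u(y,\alpha,k)\,dy$, Green's formula applied to $u_1(\cdot,\beta,k)$ and $u_2(\cdot,\beta,k)$ yields the standard identity $A_1(-\beta,\beta,k)-A_2(-\beta,\beta,k)=-\frac{1}{4\pi}\int_{B_a}p(x)u_1(x,\beta,k)u_2(x,\beta,k)\,dx$. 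Since the two potentials share their backscattering data, the left-hand side vanishes, so
\be\label{p1}
\int_{B_a}p(x)\,u_1(x,\beta,k)\,u_2(x,\beta,k)\,dx=0\qquad\forall\beta\in S^2,\ \forall k>0.
\ee
The theorem is thus equivalent to the completeness (property-C type) statement that the constrained products $\{u_1(x,\beta,k)u_2(x,\beta,k):\beta\in S^2,\ k>0\}$ are total in $L^2(B_a)$.

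Next I would exploit the high-energy (geometrical-optics) expansion of the scattering solutions. As $k\to\infty$ one has $u_j(x,\beta,k)=e^{ik\beta\cdot x}\bigl(1+v_j(x,\beta,k)\bigr)$ with $\|v_j(\cdot,\beta,k)\|\to 0$, the rate being governed by the smoothness $q_j\in H_0^\ell$, $\ell>2$. Writing $w:=(1+v_1)(1+v_2)-1$, relation \eqref{p1} becomes the exact identity
\be\label{p2}
\widetilde p(-2k\beta)=-\int_{B_a}p(x)\,e^{2ik\beta\cdot x}\,w(x,\beta,k)\,dx,\qquad \widetilde p(\xi):=\int_{B_a}e^{-i\xi\cdot x}p(x)\,dx.
\ee
Since $\{-2k\beta:\beta\in S^2,\ k>0\}=\R^3\setminus\{0\}$, the first Born term on the left formally recovers $\widetilde p$ on all of $\R^3$, while the right-hand side is a nonlinear correction tending to $0$ as $k\to\infty$. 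Finally I would use that $p$ is compactly supported, so by the Paley--Wiener theorem $\widetilde p$ extends to an entire function of exponential type; hence it suffices to prove $\widetilde p(\xi)=0$ for all sufficiently large $|\xi|$, because a real-analytic function vanishing on the exterior of a ball vanishes identically on the connected set $\R^3$, giving $p=0$.

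The crux of the argument --- and the step I expect to be the main obstacle --- is to show that the right-hand side of \eqref{p2} is genuinely negligible \emph{relative to} the first Born term as $k\to\infty$, and not merely of the same order as $\widetilde p(-2k\beta)$ itself. A naive bound $\bigl|\int_{B_a}p\,e^{2ik\beta\cdot x}w\,dx\bigr|\le \|p\|_{L^2}\|w(\cdot,\beta,k)\|_{L^2}=O(k^{-1})\|p\|_{L^2}$ controls the correction only by the \emph{full} norm of $p$, which is insufficient to force vanishing. To overcome this I would insert the transport-equation expansion $u_j(x,\beta,k)=e^{ik\beta\cdot x}\sum_{m\ge0}(ik)^{-m}b_m^{(j)}(x,\beta)$, with $b_0^{(j)}=1$ and $b_m^{(j)}$ determined recursively by $q_j$, and analyse each oscillatory integral $\int_{B_a}p\,b_m^{(1)}b_{m'}^{(2)}e^{2ik\beta\cdot x}\,dx$ by stationary phase, using $\ell>2$ to gain the extra decay. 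Integrating \eqref{p2} against the measure $k^2\,dk\,d\beta$ (the polar form of the Plancherel identity for $p$), I would then absorb the correction on the frequency shell $|\xi|\ge R$ for $R$ large.

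This program yields $\widetilde p(\xi)=0$ for $|\xi|\ge R$, and the Paley--Wiener analyticity propagates the vanishing to all $\xi\in\R^3$, whence $p\equiv 0$ and $q_1=q_2$. I anticipate that the delicate accounting in the preceding paragraph --- quantifying how the smoothness index $\ell>2$ converts the oscillation in \eqref{p2} into decay strong enough to dominate the correction --- is where essentially all of the real work lies; the analyticity and completeness reductions are comparatively soft once that estimate is in hand.
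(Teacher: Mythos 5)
Your opening steps coincide with the paper's: the orthogonality relation $\int_{B_a}p\,u_1u_2\,dx=0$ is exactly \lemref{lem1} applied as in \eqref{e20}, the rewriting with $u_j=e^{ik\beta\cdot x}(1+v_j)$ is \eqref{e21}, and you correctly isolate the crux, namely that the nonlinear correction is controlled only by a \emph{global} norm of $p$, which by itself proves nothing. The gap is that your proposed resolution does not remove this obstruction. A transport/Born expansion turns the correction into terms of the form $k^{-m}\,\widetilde{p\,c_m}(-2k\beta)$, and $\widetilde{p\,c_m}=(2\pi)^{-3}\tilde p*\tilde c_m$ is again bounded only by global norms of $\tilde p$; integrating against $k^2\,dk\,d\beta$ and ``absorbing'' can therefore yield at best an inequality of the type $\|\tilde p\|_{L^2(|\xi|\ge R)}\le \delta(R)\,\|\tilde p\|_{L^2(\R^3)}$, equivalently $\|\tilde p\|_{L^2(|\xi|\ge R)}\le C\delta(R)\,\|\tilde p\|_{L^2(|\xi|<R)}$ after absorption, with $\delta(R)\to0$. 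This is perfectly compatible with $p\not\equiv0$: the data relation at large real $k$ probes only the shell $|\xi|=2k$, and nothing in your scheme controls the low frequencies of $\tilde p$ by its high frequencies, so you obtain smallness of the tail, never exact vanishing. Hence the assertion ``this program yields $\tilde p(\xi)=0$ for $|\xi|\ge R$'' is a non sequitur, and the Paley--Wiener step has nothing to propagate. (This is precisely why the Born-series/stationary-phase route has historically produced only small-potential and generic uniqueness results for backscattering, not global uniqueness.)

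The paper escapes the obstruction by leaving the real $k$-axis, which your proposal never does. Since the $q_j$ are compactly supported, relation \eqref{e21} extends analytically in $k$ (\lemref{lem2}) to $k=(\kappa+i\eta)/2$, $\eta\ge0$, giving \eqref{e23}: $\tilde p((\kappa+i\eta)\beta)+(2\pi)^{-3}\tilde\epsilon*\tilde p=0$, where the convolution runs over \emph{real} frequencies and is therefore still bounded by $\nu(\kappa,\eta)\sup_{s\in\R^3}|\tilde p(s)|$ as in \eqref{e24}. The key step (\lemref{lem4}, proved via the Phragm\'en--Lindel\"of principle, \lemref{lem3}, and the symmetry \eqref{e19}) is that unless $p\equiv0$ one has $\max_{\beta}|\tilde p((\kappa+i\eta)\beta)|\to\infty$ as $\eta\to\infty$, so $\eta(\kappa)$ can be chosen so that $\max_\beta|\tilde p((\kappa+i\eta(\kappa))\beta)|$ \emph{equals} $\mathcal{P}:=\sup_{s\in\R^3}|\tilde p(s)|$, which is the inequality \eqref{e26} you would need; moreover the smoothness bound \eqref{e39} forces $\eta(\kappa)=O(\ln\kappa)$, so the factor $e^{a\eta}$ is only polynomial in $\kappa$ and the contraction estimate $\nu(\kappa,\eta(\kappa))\to0$ of \eqref{e40} (which uses $\ell>2$ and the Appendix bound on $\|T^2\|$) survives. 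Then $\mathcal{P}\le\nu\mathcal{P}$ with $\nu<1$ forces $\mathcal{P}=0$, i.e.\ $p=0$. On the real axis the analogue of \eqref{e26}, $\sup_\beta|\tilde p(2k\beta)|\ge\sup_s|\tilde p(s)|$, is false in general (by the Riemann--Lebesgue lemma the left side tends to zero), so a device of this kind --- making the principal term attain the global maximum of $|\tilde p|$ while keeping the correction a strict contraction --- is indispensable, and it is exactly what your outline lacks.
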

{\bf Remark 1.} The conclusion of \thmref{thm1} remains valid
if the data $A(-\beta,\beta,k)$ are known $\forall \beta\in S_1^2$
and $k\in(k_0,k_1)$ where $(k_0,k_1)\subset[0,\infty)$ is an
arbitrary small interval,  $k_1>k_0$, and $S_1^2$ is an arbitrary
small open subset of $S^2.$ 

In Section 2 we formulate some known
auxiliary results. 

In Section 3 proof of \thmref{thm1} is given.

In the Appendix a technical estimate is proved.

A brief announcement of the result, stated in Theorem 1.1, is
given in \cite{R571}.

\section{Auxiliary results}
Let \be\label{e3} F(g):=\tilde{g}(\xi)=\int_{\R^3}g(x)e^{i\xi\cdot
x} dx,\quad g(x)=\frac{1}{(2\pi)^3}\int_{\R^3} e^{-i\xi\cdot
x}\tilde{g}(\xi) d\xi. \ee

If $f*g:=\int_{\R^3}f(x-y)g(y)dy,$ then \be\label{e4}
F(f*g)=\tilde{f}(\xi)\tilde{g}(\xi),\quad
F(f(x)g(x))=\frac{1}{(2\pi)^3}\tilde{f}*\tilde{g}. \ee If
\be\label{e5} G(x-y,k):=\frac{e^{ik[|x-y|-\beta\cdot(x-y)]}
}{4\pi|x-y|},\ee 
then 
\be\label{e6}
F(G(x,k))=\frac{1}{\xi^2-2k\beta\cdot \xi}, \qquad \xi^2:=\xi\cdot
\xi. 
\ee 
The scattering solution $u=u(x,\alpha,k)$ solves (uniquely)
the integral equation \be\label{e7} u(x,\alpha,k)=e^{ik\alpha\cdot
x}-\int_{B_a}g(x,y,k)q(y)u(y,\alpha,k)dy, \ee where \be\label{e8}
g(x,y,k):=\frac{e^{ik|x-y|}}{4\pi|x-y|}.\ee If \be\label{e9}
v=e^{-ik\alpha\cdot x}u(x,\alpha,k),\ee then \be\label{e10}
v=1-\int_{B_a}G(x-y,k)q(y)v(y,\alpha,k)dy,\ee where $G$ is defined
in \eqref{e5}.

Define $\epsilon$ by the formula \be\label{e11} v=1+\epsilon.\ee
Then \eqref{e10} can be rewritten as 
\be\label{e12}
\epsilon(x,\alpha,k)=-\int_{\R^3}G(x-y,k)q(y)dy- T\epsilon,
\ee
where 
$$T\epsilon:=\int_{B_a}G(x-y,k)q(y)\epsilon(y,\alpha,k)dy.$$
Fourier transform of \eqref{e12} yields (see \eqref{e4},\eqref{e6}):
\be\label{e13}
\tilde{\epsilon}(\xi,\alpha,k)=-\frac{\tilde{q}(\xi)}{\xi^2-2k\alpha\cdot
\xi}-\frac{1}{(2\pi)^3}\frac{1}{\xi^2-2k\alpha\cdot
\xi}\tilde{q}*\tilde{\epsilon}. \ee An essential ingredient of our
proof in Section 3 is the following lemma, proved by the author
(see its proof in  \cite{R470}, p.262, or in  \cite{R425}):

\begin{lem}\label{lem1}
If $A_j(\beta,\alpha,k)$ is the scattering amplitude corresponding
to potential $q_j$, then \be\label{e14}
-4\pi[A_1(\beta,\alpha,k)-A_2(\beta,\alpha,k)]=\int_{B_1}[q_1(x)-q_2(x)]u_a(x,\alpha,k)u_2(x,-\beta,k)dx,
\ee where $u_j$ is the scattering solution corresponding to $q_j$.
\end{lem}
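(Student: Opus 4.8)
The plan is to avoid the far-field surface integral altogether and to work directly with the Lippmann--Schwinger equation \eqref{e7}, which under Assumption A) has a unique solution. The first step is to record the standard integral representation of the scattering amplitude. Letting $r=|x|\to\infty$ with $\beta=x/r$ in \eqref{e7} and using the uniform (for $y\in B_a$) expansion $|x-y|=r-\beta\cdot y+O(1/r)$, one reads off from \eqref{e8} that $g(x,y,k)=\frac{e^{ikr}}{4\pi r}\,e^{-ik\beta\cdot y}(1+o(1))$, and comparison with \eqref{e2} yields
\be\label{eP1}
A(\beta,\alpha,k)=-\frac{1}{4\pi}\int_{B_a}e^{-ik\beta\cdot y}q(y)u(y,\alpha,k)\,dy.
\ee
Applying \eqref{eP1} to $q_1$ expresses $A_1$ through $u_1(\cdot,\alpha,k)$. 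The key move is to introduce the second factor $u_2(\cdot,-\beta,k)$ by rewriting the plane wave $e^{-ik\beta\cdot y}$ via \eqref{e7} for $q_2$ and incident direction $-\beta$: since $u_2(y,-\beta,k)=e^{-ik\beta\cdot y}-\int_{B_a}g(y,z,k)q_2(z)u_2(z,-\beta,k)\,dz$, we have $e^{-ik\beta\cdot y}=u_2(y,-\beta,k)+\int_{B_a}g(y,z,k)q_2(z)u_2(z,-\beta,k)\,dz$. Substituting this into \eqref{eP1} splits $A_1$ into a ``diagonal'' term $-\frac{1}{4\pi}\int_{B_a}q_1(y)u_1(y,\alpha,k)u_2(y,-\beta,k)\,dy$ and a double-integral correction.

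For $A_2$ the natural output of \eqref{eP1} is $u_2(\cdot,\alpha,k)$ rather than $u_2(\cdot,-\beta,k)$, so I would first invoke the reciprocity relation $A_2(\beta,\alpha,k)=A_2(-\alpha,-\beta,k)$, a standard consequence of the symmetry of the Green's function of $-\nabla^2+q_2-k^2$. This rewrites $A_2(\beta,\alpha,k)=-\frac{1}{4\pi}\int_{B_a}e^{ik\alpha\cdot z}q_2(z)u_2(z,-\beta,k)\,dz$. Now I replace $e^{ik\alpha\cdot z}$ by the scattering solution for $q_1$ and direction $\alpha$, using $e^{ik\alpha\cdot z}=u_1(z,\alpha,k)+\int_{B_a}g(z,y,k)q_1(y)u_1(y,\alpha,k)\,dy$, which produces the matching diagonal term $-\frac{1}{4\pi}\int_{B_a}q_2(z)u_2(z,-\beta,k)u_1(z,\alpha,k)\,dz$ together with its own double-integral correction.

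Subtracting the two expressions, the diagonal terms combine into $-\frac{1}{4\pi}\int_{B_a}[q_1(y)-q_2(y)]u_1(y,\alpha,k)u_2(y,-\beta,k)\,dy$, which is precisely the right-hand side of \eqref{e14} after multiplication by $-4\pi$ (here I read $u_1$ for $u_a$ and $B_a$ for $B_1$ in the statement). The correction coming from $A_1$ has integrand $q_1(y)u_1(y,\alpha,k)\,g(y,z,k)\,q_2(z)u_2(z,-\beta,k)$, and the one coming from $A_2$ has integrand $q_2(z)u_2(z,-\beta,k)\,g(z,y,k)\,q_1(y)u_1(y,\alpha,k)$; by the symmetry $g(y,z,k)=g(z,y,k)$ these two double integrals have identical integrands, so they are equal and cancel in $A_1-A_2$, completing the proof. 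I expect the main technical point to be the rigorous justification of \eqref{eP1} and of the reciprocity relation under Assumption A)---that is, the uniform far-field expansion of $g$ and the integrability needed to apply Fubini in the double integrals---whereas the algebraic cancellation of the correction terms, which is what makes the clean bilinear identity \eqref{e14} emerge, is the conceptual crux.
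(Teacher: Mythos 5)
Your proof is correct, but note that the paper itself gives no proof of \lemref{lem1}: it cites \cite{R470}, p.~262, and \cite{R425}, where the identity is derived by the classical route --- apply Green's formula to $u_1(x,\alpha,k)$ and $u_2(x,-\beta,k)$ over a large ball $B_R$, so that $\int_{B_R}(q_1-q_2)u_1u_2\,dx$ equals a surface integral over $|x|=R$, and then evaluate that surface integral as $R\to\infty$ using the asymptotics \eqref{e2} of both scattering solutions (a stationary-phase computation on the sphere), with the reciprocity relation $A_2(-\alpha,-\beta,k)=A_2(\beta,\alpha,k)$ converting the resulting amplitude difference into $-4\pi[A_1(\beta,\alpha,k)-A_2(\beta,\alpha,k)]$. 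Your derivation replaces all of this limiting analysis by pure integral-equation algebra: you use the representation $A(\beta,\alpha,k)=-\frac{1}{4\pi}\int_{B_a}e^{-ik\beta\cdot y}q(y)u(y,\alpha,k)\,dy$ (itself a routine consequence of \eqref{e7}--\eqref{e8} and \eqref{e2}), substitute the Lippmann--Schwinger resolutions of the two plane waves, and cancel the double-integral corrections via $g(y,z,k)=g(z,y,k)$; Fubini is unproblematic since $q_j$ is bounded with compact support, $u_j$ is continuous on $B_a$, and the kernel is only weakly singular. What your route buys is elementarity: no radiation-condition manipulations, no passage to the limit $R\to\infty$, no stationary phase --- everything happens on the fixed compact set $B_a$, which also makes the extension to complex $k$ in Im$\,k\geq 0$ (needed later in the paper via \lemref{lem2}) transparent, since the Green's function then decays exponentially. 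What it does not buy is independence from reciprocity, which you correctly flag as a standard fact; it is worth observing that your own scheme proves it --- running your two substitutions with $q_1=q_2=q$ and no reciprocity assumption yields identical expressions for $A(\beta,\alpha,k)$ and $A(-\alpha,-\beta,k)$, so your argument can be made fully self-contained. You are also right that $u_a$ and $B_1$ in \eqref{e14} are typos for $u_1$ and $B_a$.
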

Consider an algebraic variety $\mathcal{M}$ in $\C^3$ defined by the
equation \be\label{e15} \theta\cdot\theta=1,\quad
\theta\cdot\theta:=\theta_1^2+\theta_2^2+\theta_3^2,\quad \theta_j\in
\C. \ee This is a non-compact variety, intersecting $\R^3$ over the
unit sphere $S^2$.

Let $R_+=[0,\infty).$ The following result is proved in \cite{R190},
p.62.
\begin{lem}\label{lem2}
If Assumption A) holds, then the scattering amplitude
$A(\beta,\alpha,k)$ is a restriction to $S^2\times S^2\times R_+$ of
a function $A(\theta',\theta,k)$ on
$\mathcal{M}\times\mathcal{M}\times\C$, analytic on
$\mathcal{M}\times\mathcal{M}$ and meromorphic on $\C$, $\theta'$,
$\theta\in \mathcal{M}$, $k\in \C$.
\end{lem}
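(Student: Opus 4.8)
The plan is to continue analytically the two ingredients of the scattering amplitude --- the incident wave and the scattering solution --- separately, using the Lippmann--Schwinger equation \eqref{e7} and the analytic Fredholm theorem. First I record the standard representation: inserting the far-field asymptotics $g(x,y,k)\sim \frac{e^{ik|x|}}{4\pi|x|}e^{-ik\beta\cdot y}$ as $|x|\to\infty$ into \eqref{e7} and comparing with \eqref{e2} gives
\[
A(\beta,\alpha,k)=-\frac{1}{4\pi}\int_{B_a}e^{-ik\beta\cdot x}q(x)u(x,\alpha,k)\,dx.
\]
The dependence on $\beta$ enters only through $e^{-ik\beta\cdot x}$, which for fixed $x\in B_a$ and $k\in\C$ is entire in $\beta\in\C^3$; its restriction to the variety $\mathcal{M}$ of \eqref{e15} is analytic in $\theta'$. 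It remains to continue $u(x,\alpha,k)$ in $\alpha$ and $k$.

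Next I would read \eqref{e7} as $(I+\mathcal{T}_k)u=u_0$, with source $u_0(x,\theta,k)=e^{ik\theta\cdot x}$ and
\[
(\mathcal{T}_k w)(x):=\int_{B_a}g(x,y,k)q(y)w(y)\,dy,\qquad g(x,y,k)=\frac{e^{ik|x-y|}}{4\pi|x-y|}.
\]
Replacing $\alpha\in S^2$ by $\theta\in\mathcal{M}$ changes only $u_0$, which is entire in $\theta\in\C^3$ and, for $x\in B_a$, bounded on compact subsets of $\mathcal{M}\times\C$, while $\mathcal{T}_k$ is independent of $\theta$. Under Assumption A) the Sobolev embedding makes $q$ bounded, the kernel $g(x,y,k)q(y)$ is weakly singular, and $\mathcal{T}_k$ is Hilbert--Schmidt, hence compact, on $L^2(B_a)$. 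Since $g(x,y,k)$ is entire in $k$ for each $x\neq y$ and its Hilbert--Schmidt norm is locally bounded in $k$, the map $k\mapsto\mathcal{T}_k$ is entire with values in the compact operators.

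The main point is then the analytic Fredholm theorem. For $\mathrm{Im}\,k$ large the factor $|e^{ik|x-y|}|=e^{-\mathrm{Im}\,k\,|x-y|}$ forces $\|\mathcal{T}_k\|<1$, so $I+\mathcal{T}_k$ is boundedly invertible there; as $k\mapsto\mathcal{T}_k$ is entire with compact values, $(I+\mathcal{T}_k)^{-1}$ exists and is meromorphic on all of $\C$, with poles confined to the discrete set where $-1$ is an eigenvalue of $\mathcal{T}_k$. By the resolvent analysis of \cite{R6} recalled in the Introduction these poles are exactly the points $ik_j$ and the resonances, so nothing worse than meromorphy occurs. Hence
\[
u(x,\theta,k)=\big[(I+\mathcal{T}_k)^{-1}u_0(\cdot,\theta,k)\big](x)
\]
is, as an $L^2(B_a)$-valued function, analytic in $\theta\in\mathcal{M}$ (the source is, and the resolvent is $\theta$-independent) and meromorphic in $k\in\C$.

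Substituting this continued $u$ and the entire factor $e^{-ik\theta'\cdot x}$ into the amplitude formula and integrating over the compact set $B_a$ produces $A(\theta',\theta,k)$, analytic on $\mathcal{M}\times\mathcal{M}$ and meromorphic on $\C$, and restricting to $A(\beta,\alpha,k)$ on $S^2\times S^2\times R_+$. The hard part is the third step: one must check that $k\mapsto\mathcal{T}_k$ is genuinely entire into a space of compact operators --- so that the analytic Fredholm alternative applies on the whole $k$-plane and not merely on $\mathrm{Im}\,k\geq0$ --- and that the poles of $(I+\mathcal{T}_k)^{-1}$ coincide with those of the resolvent kernel $G$. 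Both facts rest on the compact support and smoothness required in Assumption A).
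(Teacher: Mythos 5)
Your argument is correct and is essentially the standard proof: the paper itself gives no proof of \lemref{lem2}, only a citation to \cite{R190}, p.~62, where the continuation is carried out in just this way --- via the representation $A(\beta,\alpha,k)=-\frac{1}{4\pi}\int_{B_a}e^{-ik\beta\cdot x}q(x)u(x,\alpha,k)\,dx$, entirety of the data in $\theta',\theta\in\mathcal{M}$, and analytic Fredholm theory for $(I+\mathcal{T}_k)^{-1}$ on the whole $k$-plane. Your closing worry about identifying the poles of $(I+\mathcal{T}_k)^{-1}$ with those of the resolvent kernel $G$ is unnecessary for the lemma as stated, since only meromorphy on $\C$ is asserted, and that already follows from compactness of $\mathcal{T}_k$, entirety of $k\mapsto\mathcal{T}_k$, and invertibility of $I+\mathcal{T}_k$ for large $\mathrm{Im}\,k>0$.
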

The scattering solution $u(x,\alpha,k)$ is a meromorphic function of
$k$ in $\C$, analytic in Im$k\geq 0$, except, possibly at the points
$k=ik_j$, $1\leq j\leq n$.

We need the notion of the Radon transform
(see, e.g., \cite{R348}): 
\be\label{e16}
\hat{f}(\beta,\lambda):=\int_{\beta\cdot x=\lambda}f(x)d\sigma, \ee
where $d\sigma$ is the element of the area of the plane $\beta\cdot
x=\lambda$, $\beta\in S^2$, $\lambda=const.$ One has (see
\cite{R348}, pp. 12, 15) \be\label{e17}
\int_{B_a}f(x)dx=\int_{-a}^a\hat{f}(\beta,\lambda)d\lambda, \ee
\be\label{e18} \int_{B_a}e^{ik\beta\cdot x}f(x)dx=\int_{-a}^a
e^{ik\lambda}\hat{f}(\beta,\lambda) d\lambda, \ee \be\label{e19}
\hat{f}(\beta,\lambda)=\hat{f}(-\beta,-\lambda). \ee Finally, we
need a Phragmen-Lindel\"{o}f lemma, which is proved in  \cite{Le}, 
p.69,  and  in \cite{PS}.
\begin{lem}\label{lem3}
Let $f(z)$ be holomorphic inside an angle $\mathcal{A}$ of opening
$<\pi$;  $|f(z)|\leq c_1e^{c_2|z|}, \quad z\in \mathcal{A} $; $|f(z)|\leq
M$ on the boundary of $\mathcal{A}$; and $f$ is continuous up to the 
boundary
of $\mathcal{A}$. Then $|f(z)|\leq M,\quad \forall z\in \mathcal{A}.$
\end{lem}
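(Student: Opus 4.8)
The plan is to prove this by the classical comparison-function method: multiply $f$ by an auxiliary exponential factor that respects the boundary bound $M$ while killing the admissible growth $e^{c_2|z|}$ in the interior, then invoke the ordinary maximum modulus principle on bounded sectors and pass to the limit. First I would normalize the geometry. After a rotation (which changes neither the hypotheses nor the conclusion), assume the angle is symmetric about the positive real axis,
\[
\mathcal{A}=\{z=re^{i\phi}:\ r>0,\ |\phi|<\theta_0\},\qquad 2\theta_0<\pi,
\]
so that the half-opening satisfies $\theta_0<\pi/2$. Since then $\pi/(2\theta_0)>1$, I can fix an exponent $\rho$ with $1<\rho<\pi/(2\theta_0)$, so that $\rho\theta_0<\pi/2$. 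For $z\in\mathcal{A}$ write $z^\rho=r^\rho e^{i\rho\phi}$ using the principal branch (real and positive on the positive axis); then $|\rho\phi|<\rho\theta_0<\pi/2$, hence $\cos(\rho\phi)\ge\cos(\rho\theta_0)=:\delta>0$ on the closed angle and $\operatorname{Re}z^\rho=r^\rho\cos(\rho\phi)\ge\delta\, r^\rho\ge 0$.

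Next, for each $\varepsilon>0$ I would introduce the comparison function $g_\varepsilon(z):=f(z)\exp(-\varepsilon z^\rho)$, which is holomorphic in $\mathcal{A}$ and continuous up to the boundary. On the two bounding rays $\phi=\pm\theta_0$ one has $\operatorname{Re}z^\rho\ge 0$, so $|g_\varepsilon(z)|\le|f(z)|\,e^{-\varepsilon\operatorname{Re}z^\rho}\le M$, and at the vertex $|g_\varepsilon(0)|=|f(0)|\le M$. Inside the angle the growth hypothesis gives
\[
|g_\varepsilon(z)|\le c_1 e^{c_2 r}\exp(-\varepsilon\operatorname{Re}z^\rho)\le c_1\exp\bigl(c_2 r-\varepsilon\delta\, r^\rho\bigr),
\]
and since $\rho>1$ the term $\varepsilon\delta\, r^\rho$ eventually dominates $c_2 r$, so $|g_\varepsilon(z)|\to 0$ as $|z|\to\infty$ in $\mathcal{A}$. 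Consequently, for each $\varepsilon$ there is an $R$ so large that $|g_\varepsilon|\le M$ on the circular arc $|z|=R$, $|\phi|\le\theta_0$. Applying the maximum modulus principle to $g_\varepsilon$ on the bounded sector $\mathcal{A}\cap\{|z|\le R\}$, whose boundary consists of the two ray segments and that arc, yields $|g_\varepsilon|\le M$ throughout the sector; letting $R\to\infty$ gives $|g_\varepsilon(z)|\le M$ for all $z\in\mathcal{A}$. Unwinding the definition, $|f(z)|\le M\exp(\varepsilon\operatorname{Re}z^\rho)\le M e^{\varepsilon r^\rho}$, and finally letting $\varepsilon\to 0^+$ with $z$ fixed produces the asserted bound $|f(z)|\le M$.

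The one genuinely load-bearing step is the simultaneous choice of the exponent $\rho$: it must exceed $1$ so that the super-linear decay of $\exp(-\varepsilon\operatorname{Re}z^\rho)$ overpowers the order-one growth $e^{c_2|z|}$ permitted for $f$, yet it must be small enough that $\rho\theta_0<\pi/2$, which guarantees $\operatorname{Re}z^\rho>0$ throughout the closed angle and hence that the comparison factor stays $\le 1$ on the boundary rays. The hypothesis that the opening $2\theta_0$ is strictly less than $\pi$ is exactly what makes the interval $(1,\pi/(2\theta_0))$ nonempty, and so what makes such a $\rho$ exist; were the opening equal to $\pi$ the two requirements would collide, which is the familiar reason the principle fails for a half-plane under mere order-one growth. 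Once $\rho$ is pinned down, the remaining pieces — the maximum-modulus argument on a bounded sector and the two limiting passages in $R$ and then $\varepsilon$ — are entirely routine.
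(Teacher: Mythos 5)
Your proof is correct, and the comparison with the paper is straightforward: the paper does not prove \lemref{lem3} at all, but simply cites Levin (p.~69) and Polya--Szeg\"{o}, and the argument in those references is exactly your comparison-function method, multiplying $f$ by $e^{-\varepsilon z^{\rho}}$ with $1<\rho<\pi/(2\theta_0)$ and applying the maximum modulus principle on truncated sectors before letting $R\to\infty$ and then $\varepsilon\to 0^{+}$. You have therefore reproduced the standard cited proof in full, including the one genuinely delicate point --- that the opening $<\pi$ is precisely what makes the interval $(1,\pi/(2\theta_0))$ nonempty --- with only the trivial omission that a translation (in addition to a rotation) may be needed to place the vertex of $\mathcal{A}$ at the origin.
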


\section{Proof of \thmref{thm1}}
We start with the observation that the scattering data in Remark 1
determine uniquely the scattering data in \thmref{thm1} by
\lemref{lem2}.

{\it Let us outline the ideas of the proof of \thmref{thm1}.}

Assume that
$q_j$, $j=1,2$, generate the same scattering data:
$$A_1(-\beta,\beta,k)=A_2(-\beta,\beta,k)\qquad \forall \beta\in S^2,
\quad \forall k>0,$$ and let
$$p(x):=q_1(x)-q_2(x).$$
Then by \lemref{lem1},
see equation \eqref{e14}, one gets \be\label{e20}
0=\int_{B_a}p(x)u_a(x,\beta,k)u_2(x,\beta,k)dx, \qquad \forall \beta\in
S^2,\ \forall k>0.\ee 
By \eqref{e9} and \eqref{e11} one can rewrite \eqref{e20} as 
\be\label{e21} 0=\int_{B_a}e^{2ik\beta\cdot
x}[1+\epsilon(x,k)]p(x)dx=0\quad \forall \beta\in S^2,\ \forall k>0,
\ee where \bee
\epsilon(x,k):=\epsilon:=\epsilon_1(x,k)+\epsilon_2(x,k)+
\epsilon_1(x,k)\epsilon_2(x,k).
\eee 
By \lemref{lem2} the relations \eqref{e20} and \eqref{e21} hold for 
complex $k$, 
\be\label{e22} k=\frac{\kappa+i\eta}{2},\qquad \kappa+i\eta\neq
2ik_j,\quad \eta\geq 0, \ee in particular, for $\eta>k_n,$
$\kappa\in \R$. Using formulas \eqref{e3}-\eqref{e4}, one derives
from \eqref{e21} the relation \be\label{e23}
\tilde{p}((\kappa+i\eta)\beta)+\frac{1}{(2\pi)^3}\tilde{\epsilon}*\tilde{p}=0
\qquad \forall \beta,\ \forall \kappa\in \R,\quad \eta>k_n. \ee 
One has: 
\be\label{e24} \sup_{\beta\in
S^2}|\tilde{\epsilon}*\tilde{p}|:=\sup_{\beta\in
S^2}|\int_{\R^3}\tilde{\epsilon}((\kappa+i\eta)\beta-s)\tilde{p}(s)ds|\leq
\nu(\kappa, \eta)\sup_{s\in \R^3}|\tilde{p}(s)|, \ee 
where 
$$\nu(\kappa, \eta):=\sup_{\beta\in S^2}\int_{\R^3}|\tilde{\epsilon}
((\kappa+i\eta)\beta-s)|ds.$$
We will prove that if  $\eta=\eta(\kappa)=O(\ln \kappa)$,  
then the following inequality holds:
\be\label{e25}
0<\nu(\kappa,\eta(\kappa))<1, \qquad \kappa\to \infty. \ee 
If one proves that
\be\label{e26} \sup_{\beta\in
S^2}|\tilde{p}((\kappa+i\eta(\kappa))\beta)|\geq \sup_{s\in
\R^3}|\tilde{p}(s)|,\quad \kappa\to \infty, \ee 
then it follows from
\eqref{e23}-\eqref{e26} that $\tilde{p}(s)=0$, so $p(x)=0$, and
\thmref{thm1} is proved.

{\it This completes the outline of the proof of \thmref{thm1}}.

Let us now establish estimates \eqref{e25} and \eqref{e26}. 

We assume that $p(x)\not\equiv 0$, because
otherwise there is nothing to prove. If $p(x)\not\equiv 0$, then
$$\max_{s\in\R^3}|\tilde{p}(s)|:=\mathcal{P}\neq 0.$$

\begin{lem}\label{lem4}
If Assumption A) holds, then
\be\label{e27} \lim_{\eta\to \infty}
\max_{\beta\in S^2}|\tilde{p}((\kappa+i\eta)\beta)|=\infty. \ee
For any $\kappa>0$ there is a $\eta=\eta(\kappa)$, such that
\be\label{e28}
\max_{\beta\in S^2}|\tilde{p}((\kappa+i\eta(\kappa))\beta)|=
\mathcal{P}, \ee
and
\be\label{e29} \eta(\kappa)=O(\ln\kappa)\text{\quad as \quad }\kappa\to
+\infty. \ee
\end{lem}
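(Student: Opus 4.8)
The plan is to reduce the three-dimensional quantity to a one-parameter family of one-dimensional exponential integrals via the Radon transform, and then to separate the ``soft'' existence statement \eqref{e28} from the quantitative statements \eqref{e27} and \eqref{e29}. Using \eqref{e18} with $k=\kappa+i\eta$ I would write, for each $\beta\in S^2$,
\[
\tilde p((\kappa+i\eta)\beta)=\int_{-a}^a e^{i\kappa\lambda}e^{-\eta\lambda}\,\hat p(\beta,\lambda)\,d\lambda ,
\]
so that $\Phi_\beta(z):=\tilde p(z\beta)$ is, for fixed real $\beta$, an entire function of $z$ of exponential type $\le a$. Set $M(\kappa,\eta):=\max_{\beta\in S^2}|\tilde p((\kappa+i\eta)\beta)|$. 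Two elementary facts come for free: $M(\kappa,0)\le\mathcal{P}$, because $\{\kappa\beta:\beta\in S^2\}$ is the real sphere of radius $\kappa$ while $\mathcal P$ is the maximum of $|\tilde p|$ over all of $\R^3$; and $\eta\mapsto M(\kappa,\eta)$ is continuous, being the maximum over the compact set $S^2$ of the jointly continuous function $|\Phi_\beta(\kappa+i\eta)|$.

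Granting \eqref{e27} for the moment, statement \eqref{e28} follows by a pure intermediate-value argument: since $M(\kappa,0)\le\mathcal P$, since $M(\kappa,\cdot)$ is continuous, and since $M(\kappa,\eta)\to\infty$ as $\eta\to\infty$, there is a smallest $\eta(\kappa)\ge0$ with $M(\kappa,\eta(\kappa))=\mathcal P$. Moreover \eqref{e29} reduces to a single lower bound: if I can exhibit constants $C,\kappa_0$ with $M(\kappa,C\ln\kappa)\ge\mathcal P$ for all $\kappa>\kappa_0$, then the smallest crossing satisfies $\eta(\kappa)\le C\ln\kappa$, which is exactly \eqref{e29}. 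Thus everything is driven by one estimate: a lower bound for $M(\kappa,\eta)$ with the correct exponential rate in $\eta$.

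To produce that estimate I would fix a single favourable direction. Since $p\not\equiv0$ and $\mathrm{supp}\,p$ is compact, it cannot lie in every half-space $\{\beta\cdot x\ge0\}$, so there is a $\beta_0$ for which the left endpoint $c_-:=\inf\{\lambda:\hat p(\beta_0,\lambda)\ne0\}$ is strictly negative, say $c_-=-c_0$ with $c_0>0$; I would choose $\beta_0$ so that $\hat p(\beta_0,\cdot)$ has a genuine (finite-order, nonvanishing-leading-coefficient) edge at $\lambda=-c_0$. Integrating $\int_{-c_0}^{\,c_+}e^{(i\kappa-\eta)\lambda}\hat p(\beta_0,\lambda)\,d\lambda$ by parts repeatedly (a Watson/Laplace analysis at the endpoint), the boundary terms at $\lambda=-c_0$ produce a factor $e^{c_0\eta}$, while the upper endpoint $c_+>-c_0$ and the interior carry strictly smaller exponential weight. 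This yields
\[
|\Phi_{\beta_0}(\kappa+i\eta)|\;\ge\;\frac{c_1\,e^{c_0\eta}}{(\kappa^2+\eta^2)^{b/2}}\qquad(\kappa\ \text{large}),
\]
for some $c_1>0$ and $b\ge0$ fixed by the order of the edge. With $\kappa$ fixed and $\eta\to\infty$ the right side tends to $\infty$, proving \eqref{e27} (here only the single-direction, non-uniform version is needed); and evaluating at $\eta=C\ln\kappa$ gives $c_1\kappa^{Cc_0}(\kappa^2+\eta^2)^{-b/2}\approx c_1\kappa^{Cc_0-b}\to\infty$ once $C>b/c_0$, so $M(\kappa,C\ln\kappa)\ge\mathcal P$ for large $\kappa$, which closes \eqref{e29}.

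The main obstacle is the displayed lower bound in the regime where $\kappa$ and $\eta$ grow \emph{together} ($\eta\sim\ln\kappa$): one must show that the $\lambda=-c_0$ boundary term genuinely dominates the oscillatory remainder uniformly in this range, so that no cancellation destroys the rate $e^{c_0\eta}$. This is where the hypothesis $\ell>2$ is used---it guarantees that $\hat p(\beta_0,\cdot)$ is smooth enough to sustain the required number of integrations by parts and to have a well-defined nonzero leading edge coefficient---and where a Phragm\'en--Lindel\"{o}f/exponential-type argument (\lemref{lem3}) controls the remainder along the relevant sector. I expect this uniform two-parameter remainder estimate to be the technical point deferred to the Appendix; the remaining steps (the Radon reduction, the continuity of $M$, and the intermediate-value deductions of \eqref{e28}--\eqref{e29}) are routine once it is in hand.
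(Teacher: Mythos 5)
Your reductions of \eqref{e28} and \eqref{e29} to a single lower bound are sound: the intermediate-value argument for \eqref{e28}, and the observation that exhibiting $M(\kappa,C\ln\kappa)\ge\mathcal{P}$ for large $\kappa$ forces the smallest crossing to satisfy $\eta(\kappa)\le C\ln\kappa$, are both correct (and in fact cleaner than the paper's own phrasing). But the engine that is supposed to supply that lower bound fails. You posit a direction $\beta_0$ in which $\hat{p}(\beta_0,\cdot)$ has a finite-order edge with nonvanishing leading coefficient at the left endpoint $-c_0$ of its support. Such a direction need not exist, and smoothness works against you, not for you: if $\hat{p}(\beta_0,\cdot)$ is $C^m$ on $\R$ and vanishes identically for $\lambda<-c_0$, then all of its derivatives up to order $m$ vanish at $\lambda=-c_0$, so every boundary term your integrations by parts can produce is zero; the claim that $\ell>2$ ``guarantees a well-defined nonzero leading edge coefficient'' is exactly backwards. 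Concretely, take $p$ to be a radial bump, $p(x)=e^{-1/(1-|x|^2)}$ for $|x|<1$ and $p=0$ otherwise: this is an admissible element of $H_0^\ell(B_a)$ (e.g., $q_1$ equal to this bump, $q_2=0$), and $\hat{p}(\beta,\cdot)$ vanishes to \emph{infinite} order at both endpoints of its support for every $\beta\in S^2$, so no finite-order edge exists in any direction. Hence your displayed lower bound $|\Phi_{\beta_0}(\kappa+i\eta)|\ge c_1e^{c_0\eta}(\kappa^2+\eta^2)^{-b/2}$ is unproven and not obtainable by endpoint integration by parts, and with it both \eqref{e27} and \eqref{e29} remain open in your write-up. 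Your expectation that the missing uniform two-parameter remainder estimate is what the Appendix supplies is also misplaced: the Appendix estimates $\|T^2\|$ so that equation \eqref{e12} can be solved by iterations; it contains nothing about lower bounds for $\tilde{p}$.

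For contrast, the paper never proves a lower bound of your type. It proves \eqref{e27} by contradiction: if $\max_{\beta\in S^2}|\tilde{p}((\kappa+i\eta)\beta)|$ stayed bounded as $\eta\to\infty$, then $h(z,\beta)=\int_{-a}^a e^{iz\lambda}\hat{p}(\beta,\lambda)\,d\lambda$ would be bounded on the two boundary rays of the quarter-plane $\{\kappa\ge0,\ \eta\ge0\}$ while satisfying $|h|\le ce^{a|\eta|}$ inside it; \lemref{lem3} then bounds $h$ on that angle, the symmetry \eqref{e31} (a consequence of $\hat{p}(\beta,\lambda)=\hat{p}(-\beta,-\lambda)$ and the realness of $\hat{p}$) transfers the bound to the remaining three quadrants, so $h(\cdot,\beta)$ is a bounded entire function, hence constant, and the Riemann--Lebesgue lemma forces the constant to be zero, contradicting $p\not\equiv0$. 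The relation \eqref{e29} is then extracted from the \emph{upper} bound \eqref{e39} of \lemref{lem5}, and \eqref{e28} from an intermediate-value argument using Riemann--Lebesgue at fixed $\eta$. So your use of \lemref{lem3} as a remainder control in an asymptotic expansion corresponds to nothing in the paper. If you wish to keep your strategy, the irreducible missing step is a lower bound for $\max_{\beta\in S^2}|\tilde{p}((\kappa+i\eta)\beta)|$ that survives the oscillation of $e^{i\kappa\lambda}$ uniformly in the regime $\eta\sim C\ln\kappa$, $\kappa\to\infty$; neither your argument nor the paper's Appendix provides it.
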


{\it Proof of \lemref{lem4}}.    By formula \eqref{e18} one
gets \be\label{e30}
\tilde{p}((\kappa+i\eta)\beta)=\int_{B_a}p(x)e^{i(\kappa+i\eta)\beta\cdot
x}dx=\int_{-a}^ae^{i\kappa \lambda
-\eta\lambda}\hat{p}(\beta,\lambda)d\lambda. \ee The function
$\hat{p}(\beta,\lambda)$ satisfies \eqref{e19}. Therefore
\be\label{e31} \max_{\beta\in
S^2}|\tilde{p}((\kappa+i\eta(\kappa))\beta)|=\max_{\beta\in
S^2}|\tilde{p}((\kappa-i\eta(\kappa))\beta)|. \ee 
Indeed,
\be\begin{split}\label{e32} \max_{\beta\in
S^2}|\tilde{p}((\kappa+i\eta(\kappa))\beta)|&=\max_{\beta\in
S^2}\left|\int_{-a}^ae^{i\kappa\lambda-\eta\lambda}\hat{p}(\beta,\lambda)d\lambda\right|\\
&=\max_{\beta\in
S^2}\left|\int_{-a}^ae^{-i\kappa \mu+\eta\mu}\hat{p}(\beta,-\mu)d\mu\right|\\
&=\max_{\tilde{\beta}\in S^2}\left| \int_{-a}^a e^{-i\kappa
\mu+\eta\mu}\hat{p}(-\tilde{\beta},-\mu)d\mu
\right|\\
&=\max_{\tilde{\beta}\in S^2}\left| \int_{-a}^a e^{-i\kappa
\mu+\eta\mu}\hat{p}(\tilde{\beta},\mu)d\mu
\right|\\
&=\max_{\beta\in S^2}|\tilde{p}((\kappa-i\eta)\beta)|.\\
\end{split}\ee Here we took into account that
$\hat{p}(\beta,\lambda)$ is a real-valued function, because $q_j(x)$
are real-valued. If $p(x)\not\equiv 0$, then \eqref{e30} and \eqref{e31}
imply \eqref{e27}, as follows from \lemref{lem3}. Let us give a detailed 
argument.

Consider
the function $h$ of the complex variable  $z:=\kappa+i\eta:$ 
\be\label{e33}
h:=h(z,\beta):=\int_{-a}^ae^{iz\lambda}\hat{p}(\beta,\lambda)d\lambda.\ee
If \eqref{e27} is false, then \be\label{e34} |h(z,\beta)|\leq c\quad
\forall z=\kappa+i\eta,\quad \eta\geq 0,\quad \forall \beta\in S^2,
\ee where $\kappa\geq 0$ is an arbitrary  fixed number and the constant 
$c>0$ does not depend on $\beta$ and $\eta$. 

Thus, $|h|$ is bounded on the ray $\{\kappa=0, \eta\geq 0\}$, which is 
part of the boundary of the right angle $\mathcal{A}$, and the other
part of its boundary is the ray  $\{\kappa\geq 0, \eta= 0\}$. Let us check 
that $|h|$ is bounded on this ray also.

One has
\be\label{e35} |h(\kappa,\beta)|\leq
|\int_{-a}^ae^{i\kappa \lambda}\hat{p}(\beta,\lambda)d\lambda|\leq
\int_{-a}^a|\hat{p}(\beta,\lambda)|d\lambda\leq c, \ee where $c$
stands for {\it various} constants. 
From \eqref{e34}-\eqref{e35} it
follows that on the boundary of the right angle $\mathcal{A}$, namely, 
on the two
rays $\{\kappa\geq 0, \eta=0\}$ and $\{\kappa=0, \eta\geq 0,\}$, the
entire function $h(z,\beta)$ is bounded, $|h(z,\beta)|\leq c$,
and inside $\mathcal{A}$ this function  satisfies 
the estimate
\be\label{e36} |h(z,\beta)|\leq
e^{|\eta|a}\int_{-a}^a|\hat{p}(\beta,\lambda)|d\lambda\leq
ce^{|\eta|a}. \ee
Therefore, by \lemref{lem3}, $|h(z,\beta)|\leq c$ in the
whole angle $\mathcal{A}$. 

By \eqref{e31} the same argument is applicable to the remaining three
right angles, the union of which is the whole complex $z-$plane $\C$. 
Therefore 
\be\label{e37} \sup_{z\in
\C,\beta\in S^2}|h(z,\beta)|\leq c. \ee 
This implies that
$h(z,\beta)=c.$ 

Since $\hat{p}(\beta,\lambda)\in L^1(-a,a)$, the
relation 
\be\label{e38}
\int_{-a}^ae^{iz\lambda}\hat{p}(\beta,\lambda)d\lambda=c\quad \forall
z\in \C, 
\ee implies that $c=0$, so $\hat{p}(\beta,\lambda)=0.$
Therefore
$p(x)=0$, contrary to our assumption. Consequently, the relation
\eqref{e27} is
proved. \hfill $\Box$\\
Let us derive estimate \eqref{e29}.

From the assumption $p(x)\in H^{\ell}_0(B_a)$ it
follows that \be\label{e39} |\tilde{p}((\kappa+i\eta)\beta)|\leq
c\frac{e^{a|\eta|}}{(1+\kappa^2+\eta^2)^{\ell/2}}. \ee
This inequality is established in Lemma 3.2, below.

The right-hand side of this inequality is of the order $O(1)$ as
$\kappa\to
\infty$ if and only if $\eta=O(\ln \kappa)$, which is the relation
\eqref{e29}.\hfill $\Box$

\begin{lem}\label{lem5}
If $p\in H_0^\ell(B_a)$ then estimate \eqref{e39} holds.
\end{lem}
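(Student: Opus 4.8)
The two factors in \eqref{e39} have independent origins: the exponential $e^{a|\eta|}$ reflects the compact support of $p$, while the polynomial decay $(1+\kappa^2+\eta^2)^{-\ell/2}$ reflects its $H^\ell$ smoothness. Writing $\zeta=(\kappa+i\eta)\beta$, so that $\mathrm{Re}\,\zeta=\kappa\beta$, $\mathrm{Im}\,\zeta=\eta\beta$ and hence $|\mathrm{Im}\,\zeta|=|\eta|$, the elementary bound $|e^{i\zeta\cdot x}|=e^{-\eta\beta\cdot x}\le e^{|\eta||x|}\le e^{a|\eta|}$ on $B_a$ gives at once the trivial estimate $|\tilde p(\zeta)|\le e^{a|\eta|}\|p\|_{L^1(B_a)}\le c\,e^{a|\eta|}$, uniformly in $\beta\in S^2$, with $c$ depending only on $a$ and $\|p\|_{L^2}$. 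The whole task is to upgrade this to the stated polynomial decay.

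First I would extract integer powers by integration by parts. Because $p\in H_0^\ell(B_a)$, its extension by zero lies in $H^\ell(\R^3)$ and all boundary terms vanish (this is exactly where $H_0^\ell$, rather than merely $H^\ell$, is used). Put $z:=\kappa+i\eta$ and let $\partial_\beta:=\beta\cdot\nabla$; since $\partial_\beta e^{i\zeta\cdot x}=iz\,e^{i\zeta\cdot x}$, integrating by parts $j$ times yields $\tilde p(\zeta)=(-1)^j(iz)^{-j}\,\widetilde{\partial_\beta^{\,j}p}(\zeta)$, whence
$|z|^{\,j}\,|\tilde p(\zeta)|=|\widetilde{\partial_\beta^{\,j}p}(\zeta)|\le e^{a|\eta|}\|\partial_\beta^{\,j}p\|_{L^1(B_a)}\le c\,e^{a|\eta|}\|p\|_{H^j}$. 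The constant is uniform in $\beta$ because a directional derivative of order $j$ is a fixed linear combination, with bounded coefficients, of the partials of order $j$. Since $|z|^2=\kappa^2+\eta^2$ and $H^\ell\hookrightarrow H^j$ for $j\le\ell$, combining this with the trivial bound gives, for every integer $j\le\ell$, the estimate $|\tilde p(\zeta)|\le c\,e^{a|\eta|}(1+\kappa^2+\eta^2)^{-j/2}$. Taking $j=\lfloor\ell\rfloor$ already proves \eqref{e39} when $\ell$ is an integer, and in every case yields positive polynomial decay, which is all that is actually needed later for \eqref{e29}.

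To reach the exact exponent $\ell/2$ for non-integer $\ell$, write $\ell=m+\theta$ with $m=\lfloor\ell\rfloor$ and $0\le\theta<1$. After the $m$ integrations by parts it remains to gain the fractional factor $(\kappa^2+\eta^2)^{\theta/2}$ from $\psi:=\partial_\beta^{\,m}p$, a function supported in $\overline{B_a}$ and lying in $H^\theta$ uniformly in $\beta$. Reducing to one dimension via the Radon representation \eqref{e18}, $\tilde\psi(z\beta)=\int_{-a}^a e^{iz\lambda}\hat\psi(\beta,\lambda)\,d\lambda$ is the Fourier--Laplace transform of a function of $\lambda$ supported in $[-a,a]$. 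I would close the bound $|\tilde\psi(z\beta)|\le c\,e^{a|\eta|}(1+|z|)^{-\theta}$ by complex interpolation of this (fixed, linear) transform between its two endpoint mapping bounds, which are already in hand: on $L^2$ (the trivial sup bound) and on $H^1$ (one more integration by parts). The intermediate space between $L^2_0([-a,a])$ and $H^1_0([-a,a])$ is $H^\theta_0$, and the target weighted $L^\infty$ spaces of entire functions of exponential type $a$ interpolate with the geometric-mean weight $e^{-a|\mathrm{Im}\,z|}(1+|z|)^\theta$; the boundedness of the transform on this target uses a Phragmén--Lindelöf/Hadamard three-lines argument in the spirit of \lemref{lem3}. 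Crucially, only membership $\psi\in H^\theta_0$ enters, so the possibly infinite norm $\|p\|_{H^{m+1}}$ never appears. Combining with the $m$ earlier integrations by parts gives $|\tilde p(\zeta)|\le c\,e^{a|\eta|}(1+|z|)^{-\ell}\le c\,e^{a|\eta|}(1+\kappa^2+\eta^2)^{-\ell/2}$, uniformly in $\beta$, which is \eqref{e39}.

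The main obstacle is precisely this last, fractional, step. The naive device of inserting the Bessel potential $(1-\Delta)^{\ell/2}$ fails off the real axis: the analytic continuation of its symbol is $(1+\zeta\cdot\zeta)^{\ell/2}=(1+z^2)^{\ell/2}$, whose modulus is \emph{not} comparable to $(1+\kappa^2+\eta^2)^{\ell/2}$ (indeed $1+z^2$ vanishes for $z=\pm i$ and in general satisfies only $|1+z^2|\le 1+\kappa^2+\eta^2$, with a large discrepancy for $\kappa$ small and $\eta$ large). Only the pure Laplacian and directional-derivative symbols, of modulus $(\kappa^2+\eta^2)^{m}$, behave correctly in the complex domain, and these deliver integer powers alone. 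Bridging the resulting gap to the fractional exponent $\ell/2$ while retaining the sharp exponential constant $a$ — as opposed to $a+\varepsilon$, which a smooth-cutoff argument would produce — is the delicate part, and is why the estimate is relegated to the Appendix.
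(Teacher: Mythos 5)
Your proof is correct and takes essentially the same route as the paper: repeated integration by parts against $e^{i(\kappa+i\eta)\beta\cdot x}$ (the paper differentiates with the coordinate partials $\partial_j$, you with the directional derivative $\beta\cdot\nabla$, which avoids the small-$\beta_j$ issue), each step trading one derivative of $p$ for a factor $(\kappa^2+\eta^2)^{-1/2}$ via the trivial bound $|\widetilde{\partial^{\,j}p}((\kappa+i\eta)\beta)|\leq c\,e^{a|\eta|}$. Your third paragraph on non-integer $\ell$ is extra work not present in (and not needed for) the paper, which treats $\ell$ as an integer --- its $H^\ell(B_a)$ is defined by derivatives up to order $\ell$ lying in $L^2(B_a)$ --- and simply iterates the integer step; also, as a side remark, the paper proves this lemma in Section 3, the Appendix being devoted to the norm estimate for $T^2$.
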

\begin{proof}
Consider $\partial_jp:=\frac{\partial p}{\partial x_j}.$ One has
\be\begin{split}
\left|\int_{B_a}\partial_jpe^{i(\kappa+i\eta)\beta\cdot x}
dx\right|&=\left|-i(\kappa+i\eta)\beta_j\int_{B_a}p(x)e^{i(\kappa+i\eta)\beta\cdot
x} dx\right|\\
&\leq (\kappa^2+\eta^2)^{1/2}|\hat{p}((\kappa+i\eta)\beta)|\\
&\leq (1+(\kappa^2+\eta^2)^{1/2})|\tilde{p}((\kappa+i\eta)\beta)|.
\end{split}\ee
Therefore
\be\label{e36'} |\tilde{p}((\kappa+i\eta)\beta)|\leq
c[1+(\kappa^2+\eta^2)]^{-1}e^{|\eta|a}.
\ee
Repeating this argument
one gets estimate \eqref{e39}. \hfill $\Box$

Estimate \eqref{e36'} implies that if estimate \eqref{e29} holds
 and $\kappa \to \infty$, then  the quantity
$\sup_{\beta \in S^2}|\tilde{p}((\kappa+i\eta)\beta)|$
remains bounded as $\kappa\to \infty$.

If $\eta$ is fixed and $\kappa \to \infty$, then $\sup_{\beta \in
S^2}|\tilde{p}((\kappa+i\eta)\beta)| \to 0$ by the Riemann-Lebesgue
lemma. This and  \eqref{e27} imply the existence of $\eta=\eta(\kappa)$,
such that  \eqref{e28} holds, and, consequently,  \eqref{e26} holds. 
This  $\eta(\kappa)$ satisfies
 \eqref{e29} because $\mathcal{P}$ is bounded.
\end{proof}

To complete the
proof one has to establish estimate \eqref{e25}. This
estimate will be established if one proves the following:
\be\label{e40} \lim_{\kappa\to \infty}\nu(\kappa):=\lim_{\kappa\to
\infty}\nu(\kappa,\eta(\kappa)) =0,
\ee 
where $\eta(\kappa)=O(\ln\kappa)$ and 
\be\label{e41} \nu(\kappa,\eta)=\sup_{\beta\in
S^2}\int_{\R^3}|\tilde{\epsilon}((\kappa+i\eta)\beta-s)|ds. \ee

Our argument is valid for $\epsilon_1$, $\epsilon_2$ and
$\epsilon_1\epsilon_2$, so we will use the letter $\epsilon$ and
equation \eqref{e13} for $\tilde{\epsilon}$. 

It is sufficient to check estimates \eqref{e40}-\eqref{e41} for the 
function $\tilde{q}(\xi)(\xi^2-2k\beta\cdot\xi)^{-1}$, with $2k$ replaced 
by $\kappa+i\eta$, because equation \eqref{e12} has an operator
$T\epsilon=\int_{B_a}G(x-y,k)q(y)\epsilon(y,k)dy$ with the norm
$||T^2||$ (in the space
$C(B_a)$ of functions with the sup norm) which tends to zero as
$\kappa=2$Re$k\to \infty.$ Consequently, equation \eqref{e12}
can be solved by iterations and the main term in its solution, as 
$|\kappa+i\eta|\to \infty$, $\eta\geq 0$, is the free term
in this equation. The same is true for the Fourier transform
of equation \eqref{e12}, i.e., for equation \eqref{e13}. 

Let us estimate the integral
\be\label{e42}\begin{split} I&=\sup_{\beta\in
S^2}\int_{\R^3}\frac{|\tilde{q}((\kappa+i\eta)\beta-s)
ds}{|((\kappa+i\eta)\beta-s)^2-(\kappa+i\eta)\beta(\kappa+i\eta)\beta-s)|}\\
&\leq c\sup_{\beta\in
S^2}e^{|\eta|a}\int_{\R^3}\frac{ds}{|s^2-((\kappa+i\eta)\beta\cdot
s|[1+(\kappa\beta-s)^2+\eta^2]^{\ell/2}}\\
&:=ce^{|\eta|a}J.\end{split}\ee
Here estimate \eqref{e39} was used.

Let us write the integral $J$ in the spherical coordinates with
$x_3$-axis directed along vector $\beta$, $|s|=r$, $\beta\cdot
s=r\cos\theta :=rt,\quad -1\leq t\leq 1$. Let \be\label{e43}
\gamma:=\kappa^2+\eta^2. \ee Then \be\label{e44}\begin{split}
J&=2\pi\int_0^\infty dr r \int_{-1}^1\frac{dt}{[(r-\kappa
t)^2+\eta^2t^2]^{1/2}(1+\gamma+r^2-2r\kappa t)^{\ell/2}}\\
&:=2\pi \int_0^\infty dr r B(r),
\end{split}\ee
where \be\label{e45}
B:=B(r)=B(r,\kappa,\eta):=\int_{-1}^1\frac{dt}{[(r-\kappa
t)^2+\eta^2t^2]^{1/2}(1+\gamma+r^2-2r\kappa t)^{\ell/2}}. \ee
If
$t\in[-1,1]$, then \be\label{e46} 1+\gamma+r^2-2r\kappa t\geq
1+\gamma^2+r^2-2r\kappa=1+\eta^2+(r-\kappa)^2. \ee 
Thus,
\be\label{e47}\begin{split} B&\leq
\frac{1}{[1+\eta^2+(r-\kappa)^2]^{\ell/2}}\frac{1}{\sqrt{\gamma}}\int_{-1}^1\frac{dt}{[(t-\frac{r\kappa}{\gamma})^2+\frac{\eta^2r^2}{\gamma^2}]^{1/2}}\\
&=\frac{1}{\sqrt{\gamma}[1+\eta^2+(r-\kappa)^2]^{\ell/2}}
\left|\ln\left|\frac{1-\frac{r\kappa}{\gamma}+
\sqrt{(1-\frac{r\kappa}{\gamma})^2+\frac{\eta^2r^2}{\gamma^2}}}
{\sqrt{(1+\frac{r\kappa}{\gamma})^2+\frac{\eta^2r^2}{\gamma^2}}-1-\frac{r\kappa}{\gamma}}
\right|\right|.
\end{split}\ee
Consequently, \be\label{e48} J\leq
\frac{2\pi}{\sqrt{\gamma}}\int_0^\infty\frac{dr
r}{[1+\eta^2+(r-\kappa)^2]^{\ell/2}}\left|\ln\left|\frac{1-\frac{r\kappa}
{\gamma}+\sqrt{(1-\frac{r\kappa}{\gamma})^2+\frac{\eta^2r^2}{\gamma^2}}}
{\sqrt{(1+\frac{r\kappa}{\gamma})^2+\frac{\eta^2r^2}{\gamma^2}}-1-
\frac{r\kappa}{\gamma}}\right|\right|.
\ee
The integral in \eqref{e48} converges: as $r\to
\infty$ the ratio under the logarithm sign tends to 1, and the
factor in front of the logarithm is $O(r^{-(\ell-1)})$ as $r\to
\infty$. Since $\ell>2$, the integral in \eqref{e48} converges.

The modulus of the logarithmic term in \eqref{e48} behaves 
asymptotically, as $r\to 0$, like $|\ln(\frac {r^2\kappa^2}{\gamma^2})|$. 
Thus,  $\lim_{r\to 0}r|\ln(\frac {r^2\kappa^2}{\gamma^2})|=0$
for every fixed $\kappa>0$, and this limit is uniform
with respect to $\kappa$ as $\kappa\to 
\infty$ if $\eta=O(\ln \kappa)$. Therefore, the integrand in \eqref{e48} 
is defined for $r=0$ to be zero by continuity.

As $\gamma=\eta^2+\kappa^2\to \infty$ and $\eta=O(\ln \kappa)$, the 
integrand in \eqref{e48}
tends to zero for every fixed $r\geq 0$, and \eqref{e48} implies 
\be\label{e49} J\leq
o\left(\frac{1}{\sqrt{\gamma}}\right),\quad \gamma\to \infty. 
\ee
Consequently, \eqref{e42} implies 
\be\label{e50} I\leq
cr^{|\eta|a}o\left(\frac{1}{\sqrt{\kappa^2+\eta^2}}\right),\quad
\kappa\to \infty,\ \eta=O(\ln \kappa). \ee 
Therefore, 
\be\label{e51}
\lim_{\kappa\to \infty,\eta=O(\ln\kappa)}I=0. \ee 
This implies estimate \eqref{e40}.\\
\thmref{thm1} is proved. \hfill $\Box$

\appendix
{\it Estimate of the norm of the operator $T^2$.}\\

Let \be\label{A1} Tf:=\int_{B_a}G(x-y,\kappa+i\eta)q(y)f(y)dy. \ee
Assume $q\in H_0^\ell(B_a)$, $\ell>2$, $f\in C(B_a)$. Our goal is to
prove that equation \eqref{e12} can be solved by iterations
for all sufficiently large $\kappa$. 

Consider $T$ as an
operator in $C(B_a)$. One has: 
\be\label{A2}\begin{split}
T^2f&=\int_{B_a}dzG(x-z,\kappa+i\eta)q(z)\int_{B_a}G(z-y,\kappa+i\eta)
q(y)f(y)dy\\
&=\int_{B_a}dyf(y)q(y)\int_{B_a}dzq(z)G(x-z,\kappa+i\eta)G(z-y,\kappa+i\eta).
\end{split}\ee
Let us estimate the integral 
\be\label{A3}\begin{split}
I(x,y):&=\int_{B_a}G(x-z,\kappa+i\eta)G(z-y,\kappa+i\eta)q(z)dz\\
&=\int_{B_a}\frac{e^{i(\kappa+i\eta)[|x-z|-\beta\cdot(x-z)+|z-y|-
\beta\cdot(z-y)]}}{16\pi^2|x-z||z-y|}q(z)dz\\
&=\frac{1}{16\pi^2}\int_{B_a}\frac{e^{i(\kappa+i\eta)[|x-z|+|z-y|-
\beta\cdot(x-y)]}}{|x-z||z-y|}q(z)dz\\
&:=\frac{e^{-i(\kappa+i\eta)\beta\cdot(x-y)}}{16\pi^2}I_1(x,y).\\
\end{split}\ee

Let us use the following coordinates (see \cite{R190}, p.391):
\be\label{A4} z_1=\ell st+\frac{x_1+y_1}{2},\quad
z_2=\ell\sqrt{(s^2-1)(1-t^2)}\cos\psi +\frac{x_2+y_2}{2}, \ee
\be\label{A5} z_3=\ell\sqrt{(s^2-1)(1-t^2)}\sin\psi
+\frac{x_3+y_3}{2}. \ee

The Jacobian $J$ of the ransformation $(z_1,z_2,z_3)\to
(\ell,t,\psi)$ is \be\label{A6} J=\ell^3(s^2-t^2),\ee where
\be\label{A7} \ell=\frac{|x-y|}{2},\quad |x-z|+|z-y|=2\ell s,\quad
|x-z|-|z-y|=2\ell t, \ee \be\label{A8}
|x-z||z-y|=4\ell^2(s^2-t^2),\quad 0\leq \psi<2\pi,\quad t\in[-1,1],\
s\in[1,\infty). \ee 
One has 
\be\label{A9} I_1=\ell\int_a^\infty
e^{2i(\kappa+i\eta)\ell s}Q(s)ds, 
\ee 
where 
\be\label{A10}
Q(s):=Q(s,\ell,\frac{x+y}{2})=\int_0^{2\pi}d\psi\int_{-1}^1dt
q(z(s,t,\psi;\ell,\frac{x+y}{2})), 
\ee 
and the function $Q(s)\in H_0^2(\R^3)$ for any fixed $x,y$. Therefore, an 
integration by parts in \eqref{A9} yields the following estimate: 
\be\label{A11}
|I_1|=O\left(\frac{1}{|\kappa+i\eta|}\right),\quad |\kappa+i\eta|\to 
\infty.
\ee 
From \eqref{A2}, \eqref{A3} and \eqref{A11} one gets:
\be\label{A12} \|T^2\|=O\left(\frac{1}{\sqrt{\gamma}}\right),\qquad
\gamma:=\kappa^2+\eta^2\to \infty. 
\ee 
Therefore, integral equation
\eqref{e12} with $k$ replaced by $\frac{\kappa+i\eta}{2}$, can be
solved by iterations if $\gamma$ is sufficiently large and $\eta\geq
0$. Consequently, integral equation \eqref{e13} can be solved by
iterations. Thus, estimate \eqref{e40} holds if such an estimate
holds for the free term in equation \eqref{e13}, that is, for the function
$\frac{\tilde{q}}{\xi^2-(\kappa+i\eta)\beta\cdot \xi}$. namely,  if
estimate \eqref{e51} holds.

\end{document}